\newtheorem{thm}{Theorem}[section]
 \newtheorem{lem}[thm]{Lemma}
\newtheorem{asm}[thm]{Assumption}
\numberwithin{equation}{section}
\newcommand{\Lp}{\left(}
\newcommand{\Rp}{\right)}
\newcommand{\Lc}{\left[}
\newcommand{\Rc}{\right]}
\newcommand{\IE}{\mathbb{E}}
\newcommand{\1}{\textbf{1}}
 \newcommand{\IR}{\mathbb{R}}
 \newcommand{\cS}{\mathcal{S}}
\def\Objectif{
\vspace{1.2cm}
$\diamond$ \underline{\sl Objectif :}
\advance\leftskip by 1.5cm
\vspace{1.2cm} \noindent }
\numberwithin{equation}{section}
\theoremstyle{plain}
\begin{document}

\title{Recombining tree approximations for Game Options in Local Volatility models}
\author{Benjamin Gottesman Berdah\\
 Hebrew University
}%
\address{
 Department of Statistics, Hebrew University of Jerusalem\\
 {e.mail: beni.gottesman@gmail.com}}

\date{\today}

\begin{abstract}
In this paper we introduce a numerical method for optimal stopping in the framework of one
dimensional diffusion. We use the Skorokhod embedding in order to construct
recombining tree approximations for diffusions with general coefficients. This technique allows us to
determine convergence rates and construct nearly optimal stopping times which are optimal at the
same rate. Finally, we demonstrate the efficiency of our scheme with several examples of game options.
\end{abstract}

\keywords{Dynkin games, Game options, Local Volatility, Skorokhod embedding}

\maketitle

\markboth{B. Gottesman Berdah}
{Recombining tree approximations for Game Options in Local Volatility models}
\renewcommand{\theequation}{\arabic{section}.\arabic{equation}}
\pagenumbering{arabic}


\section{Introduction}
Game contingent claim (GCC) or game option, which have been introduced by Kifer \cite{Ki1}, is a derivative contract between the seller and the buyer of the option, where both have the right to exercise it at any time before a maturity date $T$. The buyer pays an initial amount, which correspond to the price of the option. 
If the buyer exercises the contract at time $t$, before the seller cancels, then he receives the payment $Y_t$, but if the seller cancels the contract before the buyer then the latter receives $X_t$. 
The difference $\Delta_t=X_t-Y_t$ is called the penalty which the seller needs to pay to the buyer for the contract cancellation.
Concretely, if the seller will exercise at a stopping time $\sigma\leq{T}$ and the buyer at a stopping time $\tau\leq{T}$
then the former pays to the latter the amount $H(\sigma,\tau)$ where
\begin{equation}\label{3.1.1}
H(\gamma,\tau):=X_{\gamma}\mathbb{I}_{\gamma<\tau}+Y_{\tau}\mathbb{I}_{\tau\leq{\gamma}}
\end{equation}
with $\mathbb{I}_{Q}=1$ if an event $Q$ occurs and $\mathbb{I}_{Q}=0$ if not. 
Without loss of generality we assume that the payoff $H(\gamma,\tau)$ is discounted.

Consider a local volatility model with time horizon $T<\infty$,
which
consists of a riskless savings account with constant interest rate
and of a risky asset whose discounted value at time $t$ satisfies the stochastic differential equation (SDE)
\begin{equation}\label{3.1.2}
\frac{dS_t}{S_t}=\sigma(S_t)dW_t
\end{equation}
with a given initial value $S_0>0$.

From the martingale representation theorem it follows (under some regularity assumptions) that the model which is given by (\ref{3.1.2}) is a model of a complete
market (for details see Section 1.6 in \cite{newKS}). Moreover, 
without loss of generality we assume that 
the market measure is the unique martingale measure.

There are several papers which dealt with the computations of game options prices (see, for instance \cite{newK2,DG2018,E,newK1,KKnew,KKS,YYZ}),
however as far as I know only \cite{KKnew,YYZ} dealt with finite maturity game options. These two papers considered the (constant volatility)
Black--Scholes model.

It is well known that
pricing game options (see \cite{Ki2} and the references there) leads to Dynkin games. For finite maturity Dynkin games
there are no explicit solutions even in the relatively simple framework where the
diffusion process is a standard Brownian motion, and so it is important to obtain efficient numerical schemes.

In this article
we extend the results from \cite{BDG2018} which studied numerical schemes for American options in local volatility models.
We construct recombing tree approximations
for Dynkin games in local volatility models and we obtain the same error estimates as in \cite{BDG2018}.
Namely, our method allows to compute the corresponding value and the optimal control with complexity $O(n^2)$ and
error estimates of order $O(n^{-1/4})$,
where $n$ is the number of time steps.
Finally, we apply our technique and provide several numerical results.

\section{Preliminaries and the Main Result}
Consider a complete probability space
$(\Omega, \mathcal{F}, \mathbb P)$ together with a standard
one--dimensional Brownian motion
$\{W_t\}_{t=0}^\infty$, and the filtration
$\mathcal{F}_t=\sigma{\{W_s|s\leq{t}\}}$ completed by the null sets.

We consider the model given by (\ref{3.1.2}).
Set $Z_t:=\ln S_t$. From the It\^o's formula
\begin{equation}\label{3.2.1-}
dZ_t=\psi(Z_t) dW_t-\frac{\psi^2(Z_t)}{2} dt, \ \ Z_0=\ln S_0.
\end{equation}
where
$\psi(z):=\sigma(e^z)$, $z\in\mathbb R$.

\begin{asm}\label{BoundedAssumption}
We assume that $\psi:\mathbb R\rightarrow\mathbb R_{+}$ is a Lipschitz continuous function such that
$\psi,\frac{1}{\psi}$ are bounded.
\end{asm}

In particular, assumption (\ref{BoundedAssumption}) implies that the
SDE (\ref{3.2.1-}) has a \emph{unique strong solution}.
Moreover, since $\frac{1}{\sigma}$ is uniformly bounded then
the market model which is given by (\ref{3.1.2}) is complete.
Thus, from \cite{Ki1,Ki2} we obtain that the price of the game contingent claim given by
(\ref{3.1.1})
 equals to
\begin{align}\label{fairDG}
V:=\inf_{\gamma\in\mathcal T_T}\sup_{\tau\in\mathcal T_T}\mathbb E\left[H(\gamma,\tau)\right]=
\sup_{\tau\in\mathcal T_T}\inf_{\gamma\in\mathcal T_T}\mathbb E\left[H(\gamma,\tau)\right].
\end{align}

Assume that the payoffs are given by $X_t:=g(t,S_t)$, $Y_t:=f(t,S_t)$
where $g,f:[0,T]\times\mathbb R_{+}\rightarrow \mathbb R$ satisfy
$g\geq f$ and, the following Lipschitz condition
\begin{align*}
&\mid f(t_1,x_1)-f(t_2,x_2)\mid +\mid g(t_1,x_1)-g(t_2,x_2)\mid +\mid h(t_1,x_1)-h(t_2,x_2)\mid\nonumber \\
&\leq L \Lp  (1+\mid x_1\mid )\mid t_2-t_1\mid +\mid x_2-x_1\mid \Rp , \ t_1,t_2\in [0,T], \ x_1,x_2\in \mathbb R_{+} \label{2.function}
\end{align*}
for some constant $L$.

We aim to approximate efficiently the value $V$. As in Section 2.3 our
main tool will be the Skorokhod embedding technique.
\subsection{Skorokhod embedding}
Fix $n\in\mathbb N$ and denote $h:=\frac{T}{n}$. Set
$\overline{\sigma}:=\sup_{z\in\mathbb R}\psi(z)$.
 We want to construct a sequence of stopping times (on the Brownian probability space)
$0<\theta^{(n)}_1<...<\theta^{(n)}_n$ such that for any $k$
\begin{equation}\label{3.2.1}
Z_{\theta^{(n)}_{k+1}}-Z_{\theta^{(n)}_k}\in
\{-\overline{\sigma}\sqrt h,0,\overline{\sigma}\sqrt{ h}
\}
\end{equation}
 and
\begin{equation}\label{3.2.2}
\IE(\theta^{(n)}_{k+1}-\theta^{(n)}_k\mid \mathcal F_{\theta^{(n)}_k})=h+O(n^{-3/2}).
\end{equation}

To that end, we apply the results from Section 2.3 in \cite{BDG2018}.
For any
$A\in [0, \overline{\sigma}\sqrt h]$ consider the stopping times
$$\rho^{Z_0}_A=\inf\{t: |Z_t-Z_0|=A\}$$
and
$$\kappa^{Z_0}_A=\sum_{i=1}^2  \mathbb I_{Z_{\rho^{Y_0}_A}=Z_0+(-1)^i A}
\inf\{t\geq \rho^{Z_0}_A: Z_t=Z_0 \ \mbox{or} \ Z_t=Z_0+(-1)^i \overline{\sigma}\sqrt{h}\}.
$$
\begin{lem}
Define the stopping times
$\theta^{(n)}_1,...,\theta^{(n)}_n$ by the following recursive relations
\begin{equation*}\label{ThetaDef}
\left\{
\begin{aligned}
\theta^{(n)}_0&:=0\\
\theta^{(n)}_{k}&:= \kappa^{Z_{\theta^{(n)}_{k-1}}}_{\sigma^2\Lp Z_{\theta^{(n)}_{k-1}}\Rp\sqrt{h}/\overline{\sigma}},\
\text{for}\ k=1,...,n.
\end{aligned}
\right .
\end{equation*}
Then the stopping times $\theta^{(n)}_0,...,\theta^{(n)}_n$ satisfy (\ref{3.2.1})--(\ref{3.2.2}).
\end{lem}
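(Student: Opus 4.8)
The plan is to reduce both (\ref{3.2.1}) and (\ref{3.2.2}) to a single estimate on one step of the recursion, and then to prove that estimate by two applications of optional stopping. The two facts that make this work are: $e^{Z_t}$ is a local martingale (it is the natural scale of (\ref{3.2.1-}), since $\tfrac12\psi^2(\phi''-\phi')=0$ forces $\phi''=\phi'$), and $M_t:=Z_t+\tfrac12\int_0^t\psi^2(Z_s)\,ds$ is a genuine square-integrable martingale, because $dM_t=\psi(Z_t)\,dW_t$ and $\psi$ is bounded. First I would dispose of (\ref{3.2.1}): by the strong Markov property at $\theta^{(n)}_{k-1}$, conditionally on $\mathcal F_{\theta^{(n)}_{k-1}}$ the increment $\theta^{(n)}_{k}-\theta^{(n)}_{k-1}$ is distributed as $\kappa^z_A$ for $Z$ started at $z=Z_{\theta^{(n)}_{k-1}}$ and $A=\psi^2(z)\sqrt h/\overline\sigma$, and by the very definition of $\kappa^z_A$ the process then sits at one of $z,\,z\pm\overline\sigma\sqrt h$, which is (\ref{3.2.1}). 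Writing $\underline\sigma:=\inf_z\psi(z)>0$ (Assumption \ref{BoundedAssumption}), we have $A\in[\underline\sigma^2\sqrt h/\overline\sigma,\ \overline\sigma\sqrt h]$, so $A>0$ and hence $\kappa^z_A>0$ a.s., which gives strict monotonicity. It remains to show $g(z):=\IE_z\bigl[\kappa^z_{\psi^2(z)\sqrt h/\overline\sigma}\bigr]=h+O(n^{-3/2})$ uniformly in $z\in\IR$, after which (\ref{3.2.2}) follows from the same strong-Markov identity, since the conditional expectation in (\ref{3.2.2}) equals $g(Z_{\theta^{(n)}_{k}})$.

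Next I would record the a priori moment bounds $\IE_z[\kappa^z_A]=O(h)$ and $\IE_z[(\kappa^z_A)^2]=O(h^2)$, uniformly in $z$ and in $A\in[\underline\sigma^2\sqrt h/\overline\sigma,\ \overline\sigma\sqrt h]$. Indeed $\kappa^z_A$ is the exit time of $Z$ from an interval of width $O(\sqrt h)$ followed by a second such exit time; for an exit time $\rho$ of $Z$ from $(l,r)$ the mean $v_1(y)=\IE_y[\rho]$ solves $\tfrac12\psi^2(v_1''-v_1')=-1$ with zero boundary data, and the maximum principle together with $\psi^2\ge\underline\sigma^2$ bounds $0\le v_1$ by the corresponding solution for the constant coefficient $\underline\sigma$, which is $O((r-l)^2)=O(h)$; the second moment $v_2(y)=\IE_y[\rho^2]$ solves $\tfrac12\psi^2(v_2''-v_2')=-2v_1$ with zero boundary data, so $v_2=O(h)\cdot O(h)=O(h^2)$. (These are exactly the estimates established in Section 2.3 of \cite{BDG2018}, which I would simply quote.) Combining the two stages gives the bounds for $\kappa^z_A$, and in particular $\kappa^z_A<\infty$ a.s.

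Now fix $z$, set $A:=\psi^2(z)\sqrt h/\overline\sigma$, $b:=\overline\sigma\sqrt h$ and $\kappa:=\kappa^z_A$. Optional stopping at $\kappa$ applied to $M$ and to $M_t^2-\int_0^t\psi^2(Z_s)\,ds$ (legitimate because the moment bounds make $\sup_{s\le\kappa}|M_s|\le|z|+b+\tfrac12\overline\sigma^2\kappa$ and $\int_0^\kappa\psi^2(Z_s)\,ds$ integrable) gives $\IE_z[M_\kappa]=z$ and
\begin{equation*}
\IE_z\Bigl[\int_0^\kappa\psi^2(Z_s)\,ds\Bigr]=\IE_z[M_\kappa^2]-z^2=\IE_z\bigl[(M_\kappa-z)^2\bigr].
\end{equation*}
Since $M_\kappa-z=(Z_\kappa-z)+\tfrac12\int_0^\kappa\psi^2(Z_s)\,ds$ with $|Z_\kappa-z|\le b$ and $0\le\int_0^\kappa\psi^2(Z_s)\,ds\le\overline\sigma^2\kappa$, expanding the square and using the moment bounds yields $\IE_z[(M_\kappa-z)^2]=\IE_z[(Z_\kappa-z)^2]+O(\sqrt h)\,\IE_z[\kappa]+O\bigl(\IE_z[\kappa^2]\bigr)=\IE_z[(Z_\kappa-z)^2]+O(h^{3/2})$. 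Because $(Z_\kappa-z)^2\in\{0,b^2\}$, I would next compute $\IE_z[(Z_\kappa-z)^2]=b^2\,\IP_z(Z_\kappa\ne z)$ from the two-stage construction: the exit probabilities of $Z$ from $(l,r)$ started at $y$ are $\tfrac{e^y-e^l}{e^r-e^l}$ and $\tfrac{e^r-e^y}{e^r-e^l}$ (here $e^{Z_t}$, being bounded on the range traversed, is a true martingale), and a Taylor expansion --- all intervals have width $O(\sqrt h)$ and $A/b=\psi^2(z)/\overline\sigma^2\in[\underline\sigma^2/\overline\sigma^2,1]$ is bounded away from $0$ --- gives $\IP_z(Z_\kappa\ne z)=A/b+O(h)$, hence $\IE_z[(Z_\kappa-z)^2]=Ab+O(h^2)=\psi^2(z)\,h+O(h^2)$. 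Finally $\psi^2$ is Lipschitz (a product of bounded Lipschitz functions) and $|Z_s-z|\le b$ for $s\le\kappa$, so
\begin{equation*}
\IE_z\Bigl[\int_0^\kappa\psi^2(Z_s)\,ds\Bigr]=\psi^2(z)\,\IE_z[\kappa]+O(\sqrt h)\,\IE_z[\kappa]=\psi^2(z)\,\IE_z[\kappa]+O(h^{3/2}).
\end{equation*}
Combining the three relations gives $\psi^2(z)\,\IE_z[\kappa]=\psi^2(z)\,h+O(h^{3/2})$; dividing by $\psi^2(z)\ge\underline\sigma^2>0$ yields $\IE_z[\kappa]=h+O(h^{3/2})=h+O(n^{-3/2})$ with a constant independent of $z$, i.e.\ $g(z)=h+O(n^{-3/2})$, completing the reduction set up in the first paragraph.

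The step I expect to be the main obstacle is the a priori second-moment bound $\IE_z[(\kappa^z_A)^2]=O(h^2)$: trying to get it from optional stopping is circular (it would bound $\IE_z[\kappa]$ in terms of $\IE_z[\kappa^2]$), so one genuinely needs the ODE/maximum-principle estimates --- or the corresponding bounds from \cite{BDG2018} --- to pin the moments down independently. Everything else is bookkeeping of $O(h^{3/2})$ error terms, with uniformity over $z\in\IR$ guaranteed by the boundedness of $\psi$ and $1/\psi$ in Assumption \ref{BoundedAssumption}.
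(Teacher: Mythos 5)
Your argument is correct. Note that the paper offers no proof of its own here: it simply defers to Section 2.3 of \cite{BDG2018} (Remark 2.2 there), and what you have written is in substance the argument that the citation stands in for. Property (\ref{3.2.1}) and strict monotonicity are, as you say, immediate from the two-stage definition of $\kappa^z_A$ together with the strong Markov property; property (\ref{3.2.2}) then follows from your two optional-stopping identities for $M_t=Z_t+\tfrac12\int_0^t\psi^2(Z_s)\,ds$ and $M_t^2-\int_0^t\psi^2(Z_s)\,ds$, the scale-function computation of the exit law (your expression simplifies to $\IP_z(Z_\kappa\ne z)=\tanh(A/2)/\tanh(b/2)=A/b+O(h)$, so $\IE_z[(Z_\kappa-z)^2]=Ab+O(h^2)=\psi^2(z)h+O(h^2)$), and the a priori bounds $\IE_z[\kappa]=O(h)$, $\IE_z[\kappa^2]=O(h^2)$. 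You are also right to flag the second-moment bound as the one ingredient that cannot be extracted from the same martingale identities and must come from the ODE/maximum-principle comparison (or be quoted from \cite{BDG2018}, which is exactly what the paper does for the entire lemma); the only cosmetic point is that the subscript $\sigma^2\Lp Z_{\theta^{(n)}_{k-1}}\Rp$ in the statement should be read as $\psi^2\Lp Z_{\theta^{(n)}_{k-1}}\Rp=\sigma^2\Lp S_{\theta^{(n)}_{k-1}}\Rp$, which is the reading you adopted.
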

\begin{proof}
The proof follows from Section 2.3 in \cite{BDG2018} (see Remark 2.2 there).
\end{proof}
Next, introduce the functions
$$p^{(1)}(z):=\frac{
\left(1-e^{-\sigma^2 z \sqrt{h}/\overline{\sigma}}\right) \left(e^{\sigma^2 z \sqrt{h}/\overline{\sigma}}-1\right)}
{\left(e^{\sigma^2 z \sqrt{h}/\overline{\sigma}}-e^{-\sigma^2 z \sqrt{h}/\overline{\sigma}}\right)
\left(e^{\overline{\sigma}\sqrt{h}}-1\right)}, \ \ z\in\mathbb R,$$
$$p^{(-1)}(z)=e^{\overline{\sigma}\sqrt{h}}p^{(1)}(z),  \ \ z\in\mathbb R$$
and
$$p^{(0)}(z)=1-p^{(1)}(z)-p^{(-1)}(z), \ \ z\in\mathbb R.$$

Observe
that the support of the random variable $Z_{\rho^{Z_0}_A}$ and
$Z_{\kappa^{Z_0}_A}-Z_{\rho^{Z_0}_A}|Z_{\rho^{Z_0}_A}$ consist of only two points.
Thus,
from the strong Markov property
of $Z$ and the fact that $e^Z$
is a martingale we obtain
\begin{equation*}
\mathbb P\left(Z_{\theta^{(n)}_{1}}=Z_0+(-1)^{i}\overline{\sigma} \sqrt{h}\right)=
p^{(i)}(Z_0), \ \ i=-1,0,1.
\end{equation*}
By applying again the strong Markov property we conclude that for any $k$
\begin{equation}\label{3.2.3}
\mathbb P\left(Z_{\theta^{(n)}_{k+1}}=Z_{\theta^{(n)}_{k}}+(-1)^{i}\overline{\sigma} \sqrt{h}|\mathcal F_{\theta^{(n)}_k}\right)=
p^{(i)}(Z_k),  \ \ i=-1,0,1.
\end{equation}

\subsection{Dynkin Games for Trinomial Models}
For a given $n$, denote by $\cS_n$ the set of all stopping time with respect to the filtration
$\{\mathcal F_{\theta^{(n)}_k}\}_{k=0}^n$, with values in the set $\{0,1,...,n\}$.
Introduce the Dynkin game value
\begin{equation}\label{fairDGn}
\begin{aligned}
V_n:&=\inf_{\zeta\in\cS_n}\sup_{\eta\in\cS_n}
\IE \left[ g\left(\zeta h, S_{\theta^{(n)}_\zeta}\right)\mathbb I_{\zeta<\eta} + f\left(\eta h, S_{\theta^{(n)}_\eta}\right)\mathbb I_{\eta\leq \zeta}\right]\\
&=\sup_{\eta\in\cS_n}\inf_{\zeta\in\cS_n}
\IE \left[ g\left(\zeta h, S_{\theta^{(n)}_\zeta}\right)\mathbb I_{\zeta<\eta} + f\left(\eta h, S_{\theta^{(n)}_\eta}\right)\mathbb I_{\eta\leq \zeta}\right].
\end{aligned}
\end{equation}

Recall that $S_t:=e^{Z_t}$. Hence,
the process $\{S_{\theta^{(n)}_k}\}_{k=0}^n$ lies on the grid
$S_0\exp\Lp  \overline{\sigma}\sqrt{h} i\Rp $, $i=-n,1-n,...,0,1,...,n$.

By combining standard dynamical programming for Dynkin games (see \cite{O}),the strong Markov property
of $S$ and the transition probabilities given by (\ref{3.2.3}) we compute $V_n$
by the following backward recursion.

Define the functions
\begin{flalign*}
J^{(n)}_k : \{Z_0+\overline{\sigma}\sqrt h\{-k, 1-k, ... &, 0, 1, ..., k \} \}\rightarrow \IR,\ \ k=0,...,n\\
J^{(n)}_n(z)&=f(T,e^z)
\end{flalign*}
and for $k=0,1,...,n-1$
$$
J^{(n)}_k (z) = \max \left( f(kh,e^z), \min \left(g(kh,e^z), \sum_{i=-1,0,1}p^{(i)}J^{(n)}_{k+1} (z+i\overline{\sigma}\sqrt h)\right)\right).
$$
We get that
$$
V_n = J^{(n)}_0 (Z_0).
$$
Moreover, the stopping time given by
$$
\eta^{*}_n := n\wedge\min\left\{k:J^{(n)}_k (Z_{\theta^{(n)}_k}) = f\left(kh, e^{Z_{\theta^{(n)}_k}}\right)\right\}
$$
is an optimal stopping time for the buyer and
$$
\zeta^{*}_n := n\wedge \min\left\{k:J^{(n)}_k (Z_{\theta^{(n)}_k}) = g\left(kh, e^{Z_{\theta^{(n)}_k}}\right)\right\}
$$
is an optimal stopping time for the seller.

Namely,
\begin{equation}\label{fairDGn1}
\begin{aligned}
V_n:&=\sup_{\eta\in\cS_n}
\IE \left[ g\left(\zeta^{*}_n h, S_{\theta^{(n)}_{\zeta^{*}_n}}\right)\mathbb I_{\zeta^{*}_n<\eta} + f\left(\eta h, S_{\theta^{(n)}_\eta}\right)\mathbb I_{\eta\leq \zeta^{*}_n}\right]\\
&=\inf_{\zeta\in\cS_n}
\IE \left[ g\left(\zeta h, S_{\theta^{(n)}_\zeta}\right)\mathbb I_{\zeta<\eta^{*}_n} + f\left(\eta^{*}_n h, S_{\theta^{(n)}_{\eta^{*}_n}}\right)
\mathbb I_{\eta^{*}_n\leq \zeta}\right].
\end{aligned}
\end{equation}

As in Section 2.2 the grid structure allows to compute $V_n$ with complexity $O(n^2)$.
We arrive to the approximation result.
\begin{thm}\label{thm2.2}
The values $V$ and $V_n$ defined respectively by (\ref{fairDG}) and (\ref{fairDGn}) satisfy
\begin{equation*}\label{-1}
\mid V-V_n \mid = O (n^{-1/4}).
\end{equation*}
Moreover,
for the stopping times
$\tau^*_n:=T\wedge\theta^{(n)}_{\eta^{*}_n}$
and
$\gamma^*_n:=T\mathbb {I}_{\zeta^{*}_n=n}+(T\wedge \theta^{(n)}_{\zeta^{*}_n})\mathbb I_{\zeta^{*}_n<n}$
we have
\begin{equation*}\label{-2}
V-\inf_{\gamma\in\mathcal T}\IE\Lc  H(\gamma,\tau^{*}_n) \Rc = O (n^{-1/4})
\end{equation*}
and
\begin{equation*}\label{-3}
\sup_{\tau\in\mathcal T}\IE\Lc H(\gamma^*_n,\tau) \Rc-V = O (n^{-1/4}).
\end{equation*}
\end{thm}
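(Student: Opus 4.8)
\medskip
\noindent\emph{Strategy of proof.} The plan is to derive all three displays from the single pair of one--sided estimates
\begin{equation}\label{eq:reduction}
\inf_{\gamma\in\mathcal T_T}\IE\Lc H(\gamma,\tau^{*}_n)\Rc\ \ge\ V_n-C\,n^{-1/4},
\qquad
\sup_{\tau\in\mathcal T_T}\IE\Lc H(\gamma^{*}_n,\tau)\Rc\ \le\ V_n+C\,n^{-1/4},
\end{equation}
with $C$ independent of $n$. Granting (\ref{eq:reduction}), weak duality for the game (\ref{fairDG}), i.e.\ $\inf_{\gamma}\IE[H(\gamma,\tau^{*}_n)]\le V\le\sup_{\tau}\IE[H(\gamma^{*}_n,\tau)]$, immediately yields $|V-V_n|\le C n^{-1/4}$; feeding this back gives $V\le\sup_{\tau}\IE[H(\gamma^{*}_n,\tau)]\le V_n+Cn^{-1/4}\le V+2Cn^{-1/4}$ and, symmetrically, $V\ge\inf_{\gamma}\IE[H(\gamma,\tau^{*}_n)]\ge V_n-Cn^{-1/4}\ge V-2Cn^{-1/4}$, which are precisely the two near--optimality statements. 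So it suffices to prove (\ref{eq:reduction}).

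\smallskip
\noindent The analytic core is a set of $L^2$--estimates for the Skorokhod scheme, which I would obtain exactly as in Section~2.3 of \cite{BDG2018}. Each increment $\theta^{(n)}_{k+1}-\theta^{(n)}_{k}$ is the time needed by $Z$ to leave an interval of width of order $\sqrt h$, so $\IE[(\theta^{(n)}_{k+1}-\theta^{(n)}_k)^2\mid\mathcal F_{\theta^{(n)}_k}]=O(h^2)$; splitting $\theta^{(n)}_k-kh$ into its drift part (controlled by (\ref{3.2.2})) and a martingale part and applying Doob's inequality gives $\big\|\max_{0\le k\le n}|\theta^{(n)}_k-kh|\big\|_{L^2}=O(n^{-1/2})$ and $\big\|\max_{1\le k\le n}(\theta^{(n)}_k-\theta^{(n)}_{k-1})\big\|_{L^2}=O(n^{-1/2})$. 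Since $d\langle Z\rangle_t=\psi^2(Z_t)\,dt$ with $\psi$ bounded, optional stopping and the It\^o isometry turn an $L^2$--control of order $n^{-1/2}$ on a random time increment into an $L^2$--control of order $n^{-1/4}$ on the corresponding increment of $Z$, hence of $S=e^Z$ once one also records the (uniform in $n$) bounds $\big\|\sup_{0\le s\le\theta^{(n)}_n}S_s\big\|_{L^p}=O(1)$ for all $p$, which follow from the boundedness of $\psi$ and the concentration of $\theta^{(n)}_n$ about $T$. This square--root loss is exactly what produces the exponent $1/4$.

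\smallskip
\noindent To prove the second estimate in (\ref{eq:reduction}) (the first is symmetric), I would fix $\tau\in\mathcal T_T$ and set $\eta:=\min\{k\le n:\theta^{(n)}_k\ge\tau\}$, with $\min\emptyset=n$; since $\{\tau\le\theta^{(n)}_k\}\in\mathcal F_{\theta^{(n)}_k}$, $\eta\in\cS_n$. Writing $H_n(\zeta,\eta):=g(\zeta h,S_{\theta^{(n)}_\zeta})\mathbb I_{\zeta<\eta}+f(\eta h,S_{\theta^{(n)}_\eta})\mathbb I_{\eta\le\zeta}$, the first line of (\ref{fairDGn1}) gives $\IE[H_n(\zeta^{*}_n,\eta)]\le V_n$, so the point is to show $\IE|H(\gamma^{*}_n,\tau)-H_n(\zeta^{*}_n,\eta)|\le C n^{-1/4}$ with $C$ not depending on $\tau$. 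This is done by splitting on $\{\zeta^{*}_n<\eta\}$ and $\{\eta\le\zeta^{*}_n\}$: on the former, $\zeta^{*}_n<n$ and $\theta^{(n)}_{\zeta^{*}_n}<\tau$, so $\gamma^{*}_n=\theta^{(n)}_{\zeta^{*}_n}<\tau$ and both payoffs are of $g$--type at the \emph{same} spatial point, differing only through $|\theta^{(n)}_{\zeta^{*}_n}-\zeta^{*}_nh|$; on the latter one checks $\gamma^{*}_n\ge\tau$, so the continuous payoff is $f(\tau,S_\tau)$ and the discrete one $f(\eta h,S_{\theta^{(n)}_\eta})$, while $0\le\theta^{(n)}_\eta-\tau\le\theta^{(n)}_\eta-\theta^{(n)}_{\eta-1}$ controls $|\tau-\eta h|$ and $|S_\tau-S_{\theta^{(n)}_\eta}|$. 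The Lipschitz hypothesis on $f,g$ together with the estimates above bound each contribution by $O(n^{-1/4})$ uniformly in $\tau$. In the symmetric argument for the first estimate of (\ref{eq:reduction}) one uses the second line of (\ref{fairDGn1}) and must, on a boundary event, compare a $g$--type continuous payoff with an $f$--type discrete one, which is absorbed by $g\ge f$ in the favourable direction.

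\smallskip
\noindent The hard part will be the \emph{uniformity} in the opponent's continuous strategy of all these error bounds, together with the $L^p$--moment control of $S$ at the random embedded times $\theta^{(n)}_k$ (recall $S$ is unbounded and $\theta^{(n)}_n$ is not a.s.\ bounded, so one needs exponential integrability of the exit times of $Z$ and a tail bound for $\theta^{(n)}_n-T$) and the bookkeeping at the exercise--type/maturity boundary in the case analysis; by contrast the minimax reduction and the Lipschitz estimates are routine.
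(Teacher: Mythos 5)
Your proposal follows essentially the same route as the paper: the same discretization maps $\tau\mapsto\varphi_n(\tau)$ sending continuous stopping times into $\cS_n$, the same comparison of $H(\gamma^*_n,\tau)$ with the discrete payoff at $(\zeta^*_n,\varphi_n(\tau))$ via the inclusion $\{\gamma^*_n<\tau\}\subset\{\zeta^*_n<\varphi_n(\tau)\}$ (your case split on $\{\zeta^*_n<\eta\}$ versus $\{\eta\le\zeta^*_n\}$ is the same device), the same appeal to the $L^2$ estimates for the embedding times from \cite{BDG2018} producing the $n^{-1/4}$ rate, and the same final chain $V_n-O(n^{-1/4})\le\inf_\gamma\IE[H(\gamma,\tau^*_n)]\le V\le\sup_\tau\IE[H(\gamma^*_n,\tau)]\le V_n+O(n^{-1/4})$. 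It is correct, and if anything more explicit than the paper about the boundary events where an $f$--type and a $g$--type payoff must be compared.
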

\begin{proof}
Fix $n\in\mathbb N$ and denote $h:=\frac{T}{n}$. Let $\tau\in\mathcal T_T$. Define the map $\varphi_n:\mathcal T_T\rightarrow\mathcal S_n$
by
$$\varphi_n(\tau)=n\wedge\min\{k:\theta^{(n)}_k\geq \tau\}, \ \ \tau\in\mathcal T_T.$$
Observe that for any $\tau\in\mathcal T_T$ we have
$$|\tau-\theta^{(n)}_{\varphi_n(\tau)}|\leq |T-\theta^{(n)}_n|+\max_{1\leq k\leq n}\theta_k-\theta_{k-1} \leq h+3\max_{0\leq k\leq n}|\theta^{(n)}_k- kh|.$$
Thus, by applying (\ref{3.2.1})--(\ref{3.2.2})
and using the exactly the same arguments as in Section 2.4 we obtain
\begin{equation}\label{-10}
\sup_{\tau\in\mathcal T_T}\mathbb E_{\mathbb P}\left[\left|
f\left(\varphi_n(\tau) h, S_{\theta^{(n)}_{\varphi_n(\tau)}}\right)-f\left(\tau,S_{\tau}\right)\right|\right]= O(n^{-1/4}).
\end{equation}
Next, we notice that
$$|\zeta^{*}_n h-\gamma^{*}_n|\leq h+\max_{0\leq k\leq n}|\theta^{(n)}_k- kh|.$$
Thus, (again we use the same arguments as in Section 2.4 )
\begin{equation}\label{-9}
\mathbb E_{\mathbb P}\left[\left|
g\left(\zeta^{*}_n h, S_{\theta^{(n)}_{\zeta^{*}_n}}\right)-g\left(\gamma^{*}_n,S_{\gamma^{*}_n}\right)\right|\right]=O(n^{-1/4}).
\end{equation}
From the definitions it is clear that for a given stopping time $\tau\in\mathcal T_T$
the inequality
$\gamma^{*}_n<\tau$ implies $\zeta^{*}_n<\varphi_n(\tau)$. Namely,
$$\{\gamma^{*}_n<\tau\}\subset \{\zeta^{*}_n<\varphi_n(\tau)\}.$$
Thus, from (\ref{fairDGn1})--(\ref{-9})
\begin{eqnarray}
&\sup_{\tau\in\mathcal T}\mathbb E_{\mathbb P}[H(\gamma^*_n,\tau)]-V_n\leq \nonumber\\
&\sup_{\tau\in\mathcal T_T}\mathbb E_{\mathbb P}\left[\left|
f\left(\varphi_n(\tau) h, S_{\theta^{(n)}_{\varphi_n(\tau)}}\right)-f\left(\tau,S_{\tau}\right)\right|\right]\nonumber\\
&+\mathbb E_{\mathbb P}\left[\left|
g\left(\zeta^{*}_n h, S_{\theta^{(n)}_{\zeta^{*}_n}}\right)-g\left(\gamma^{*}_n,S_{\gamma^{*}_n}\right)\right|\right]=O(n^{-1/4}).\label{-8}
\end{eqnarray}

Next, define the map
$\tilde\varphi_n(\gamma):\mathcal T_T\rightarrow\mathcal S_n$ by
$$\tilde\varphi_n(\gamma)=\left(n\wedge\min\{k:\theta^{(n)}_k\geq \gamma\}\right)\mathbb{I}_{\gamma<T}+n\mathbb I_{\gamma=T}, \ \ \gamma\in\mathcal T_T.$$
Notice that for any $\gamma\in\mathcal T_T$ we have
$$|\gamma-\theta^{(n)}_{\tilde\varphi_n(\gamma)}|\leq |T-\theta^{(n)}_n|+\max_{1\leq k\leq n}\theta_k-\theta_{k-1} \leq h+3\max_{0\leq k\leq n}|\theta^{(n)}_k- kh|.$$

Moreover, we observe that for any stopping time $\gamma\in\mathcal T_T$
$$\{\tilde\varphi_n(\gamma)<\eta^{*}_n\}\subset \{\gamma<\tau^{*}_n\}.$$
Thus, from (\ref{fairDGn1})
\begin{eqnarray}
&V_n-\inf_{\gamma\in\mathcal T}\mathbb E_{\mathbb P}[H(\gamma,\tau^{*}_n)]\leq \nonumber\\
&\sup_{\gamma\in\mathcal T_T}\mathbb E_{\mathbb P}\left[\left|
g\left(\tilde\varphi_n(\gamma) h, S_{\theta^{(n)}_{\tilde\varphi_n(\gamma)}}\right)-g\left(\gamma,S_{\gamma}\right)\right|\right]\nonumber\\
&+\mathbb E_{\mathbb P}\left[\left|
f\left(\eta^{*}_n h, S_{\theta^{(n)}_{\eta^{*}_n}}\right)-f\left(\tau^{*}_n,S_{\tau^{*}_n}\right)\right|\right]=O(n^{-1/4})\label{-7}
\end{eqnarray}
where the estimate is obtained exactly as in (\ref{-8}).

From (\ref{-8})--(\ref{-7})
$$V_n-O(n^{-1/4})\leq\inf_{\gamma\in\mathcal T}\mathbb E_{\mathbb P}[H(\gamma,\tau^{*}_n)] \leq V\leq\sup_{\tau\in\mathcal T}\mathbb E_{\mathbb P}[H(\gamma^*_n,\tau)]\leq V_n+O(n^{-1/4})$$
and the proof is completed.
\end{proof}

\section{Numerical Results}
Consider the local volatility model which is given by
\begin{equation*}
\frac{dS_t}{S_t}=\min\left(0.5,\max\left(0.05,\frac{\sqrt S_t}{30}\right)\right)dW_t.
\end{equation*}
This model can be viewed as a truncated version of the CEV model (\cite{CEV}).
The process $S_t$, $t\geq 0$ denotes the discounted stock price.
We assume that we have a constant interest rate $r=0.06$.

\subsection{Game Call Options}
Consider a game call option with strike price $K=100$ and constant penalty $\delta=12$.
Namely, the discounted payoff is given by
$$H(\gamma,\tau):=e^{-0.06 (\gamma\wedge\tau)}\left((S_{\gamma\wedge\tau}-100)^{+}+12
\mathbb{I}_{\gamma<\tau}\right).$$
We assume that the maturity date is $T=2$.

First, by applying the constructed above trinomial trees,
we compute (Table (\ref{table31})) the option prices for different initial stock prices.
\begin{table}\label{table31}
\begin{center}
\caption{
We provide numerical results for game call options with the above parameters, with different initial
stock prices. The number of steps in the trinomial approximations denoted by $n$.}
\begin{tabular}{lcccc}
\hline
\multicolumn{5}{c}{Game Call Option Prices}\\
\cline{2-5}
$S_0$ & n = 400 & n = 700 & n = 1200 & n = 2000  \\
\hline
80 & 6.8637 & 6.8357 & 6.8081 & 6.7823\\
85 & 8.2609 & 8.2221 &  8.1884  & 8.1433  \\
90 & 9.6056 & 9.5534 & 9.5407 & 9.5083 \\
95 & 10.9539 & 10.9332 & 10.9123 & 10.8943  \\
100 & 12 & 12 &12 & 12  \\
105 & 17 & 17 &17 & 17  \\
110 & 22 & 22 & 22 & 22 \\
\hline
\end{tabular}
\end{center}
\end{table}

Next, we calculate numerically the stopping regions.
For American call options the discounted payoff is a sub--martingale (under the martingale measure)
and so the buyer stopping time is $\tau\equiv T$.

It remains to treat the seller. Set,
$$
V^{call}(u,x):=\inf_{\gamma\in\mathcal T_u}\sup_{\tau\in\mathcal T_u}\mathbb E\left[e^{-0.06 (\gamma\wedge\tau)}\left((S_{\gamma\wedge\tau}-100)^{+}+12
\mathbb{I}_{\gamma<\tau}\right)\right], \ \ u,x>0
$$
where $S_0=x$.

We observe that the optimal stopping time for the seller is given by (recall that $S$ is the discounted stock price)
$$
\gamma^{*}=T\wedge\inf\{t: e^{0.06 t}S_t\in D^{call}\}
$$
where
$$
D^{call}=\{(t,x):  V^{call}(T-t,x)=(x-100)^{+}+12\}.
$$

We obtain numerically (Figure \ref{fi:CallStoppingRegionsLocalVol}) that the structure of the stopping region $D^{call}$ is of the form
$$D=\{(t,x): t\in [0,\mathcal T_1], \ K\leq x\leq b^{call}(t)\}\bigcup \left\{[\mathcal T_1,\mathcal T_2]\times\{K\}\right\}$$
where $\mathcal T_1<\mathcal T_2<T$ and $b^{call}:[0,\mathcal T_1]\rightarrow [K,\infty)$.

\subsection{Game Put Options}
We consider a game put option with strike price $K=100$ and constant penalty $\delta=12$.
Thus, the discounted payoff is given by

$$
H(\gamma,\tau):=e^{-0.06 (\gamma\wedge\tau)}\left((100-S_{\gamma\wedge\tau})^{+}+12
\mathbb{I}_{\gamma<\tau}\right).
$$ 

As before we take the maturity $T=2$.

\begin{figure}
\includegraphics[width=0.9\textwidth]{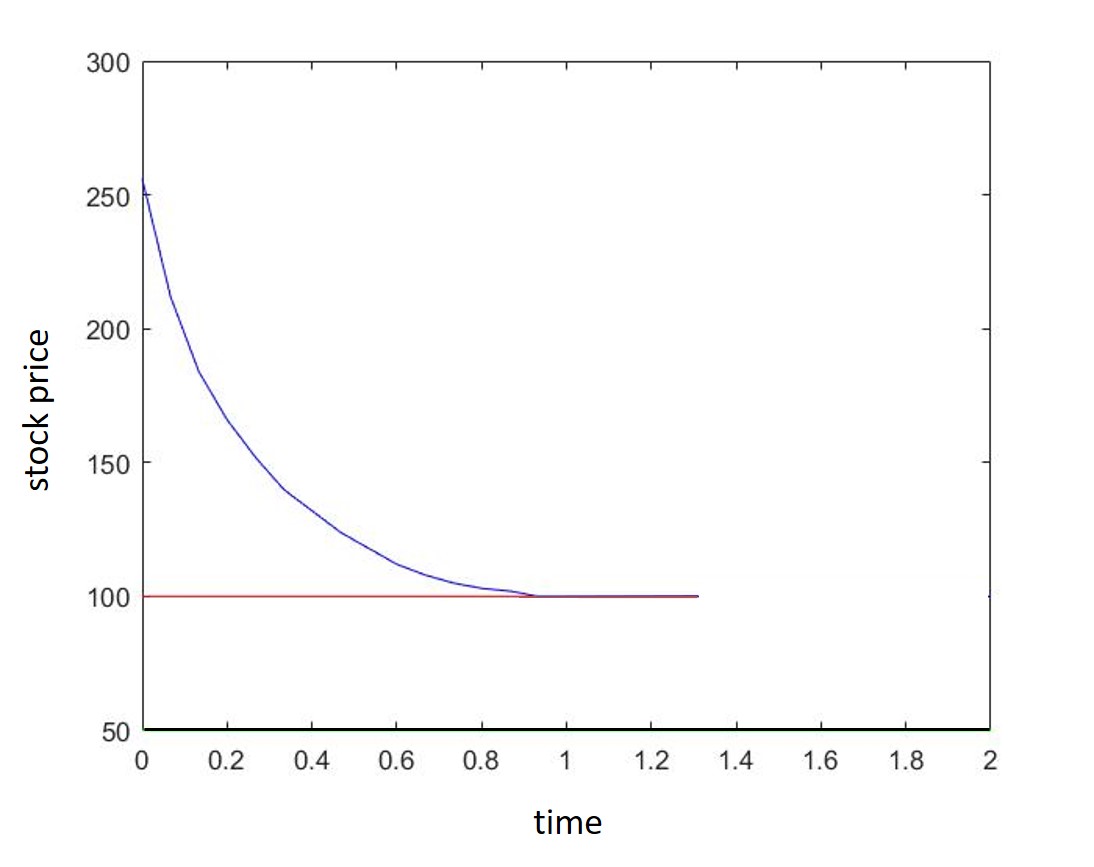}
\caption{We consider a game call option with
the above parameters.
We take $n=2000$ and compute numerically the stopping region for the seller. We get that for $t\in [0,0.93]$ the seller should exercise at the first moment when the stock price is between the strike price and the
value given by the blue curve. For $t\in [0.93,1.33]$ the seller stops at the first moment the stock price equals to the strike price.
After the time $t=1.33$ the investor should not exercise (before the maturity date).}
\label{fi:CallStoppingRegionsLocalVol}
\end{figure}

\begin{table}\label{table32}
\begin{center}
\caption{
We provide numerical results for game put options with the above parameters, with different initial
stock prices. The number of steps in the trinomial approximations denoted by $n$.}
\begin{tabular}{lcccc}
\hline
\multicolumn{5}{c}{Game Put Option Prices}\\
\cline{2-5}
$S_0$ & n = 400 & n = 700 & n = 1200 & n = 2000  \\
\hline
80 & 22.6243 & 22.6312 & 22.6341 & 22.6184\\
85 & 19.7150 & 19.6465 &  19.6027  & 19.5848  \\
90 & 16.9593 & 16.9420 & 16.9110 & 16.8969 \\
95 & 14.4512 & 10.4282 & 14.4104 & 14.3933  \\
100 & 12 & 12 &12 & 12  \\
105 & 11.1356 & 11.0930 &11.0841 & 11.0378  \\
110 & 10.1854 & 10.1368 & 10.0905 & 10.0385 \\
115 & 9.1742   &   9.1132 &   9.0713 & 9.0622 \\
120 & 8.3025 & 8.2646 & 8.2529 & 8.2378 \\
\hline
\end{tabular}
\end{center}
\end{table}

In Table (\ref{table32}) we compute the option prices for different initial stock prices.

Finally, we calculate numerically the stopping regions.
We start with the seller. In \cite{KKnew} the authors showed that for finite horizon game put options in the Black--Scholes model the stopping time for the seller is of the form
$$\gamma^{*}=T\wedge\inf\{t\in [0,t^{*}]:e^{rt} S_t=K\}$$
for some $t^{*}$ which the authors characterize. In fact, their arguments are valid for any local volatility model. Of course, the characterization of $t^{*}$ is more complicated
in models with non constant parameters.
\begin{figure}
\includegraphics[width=0.9\textwidth]{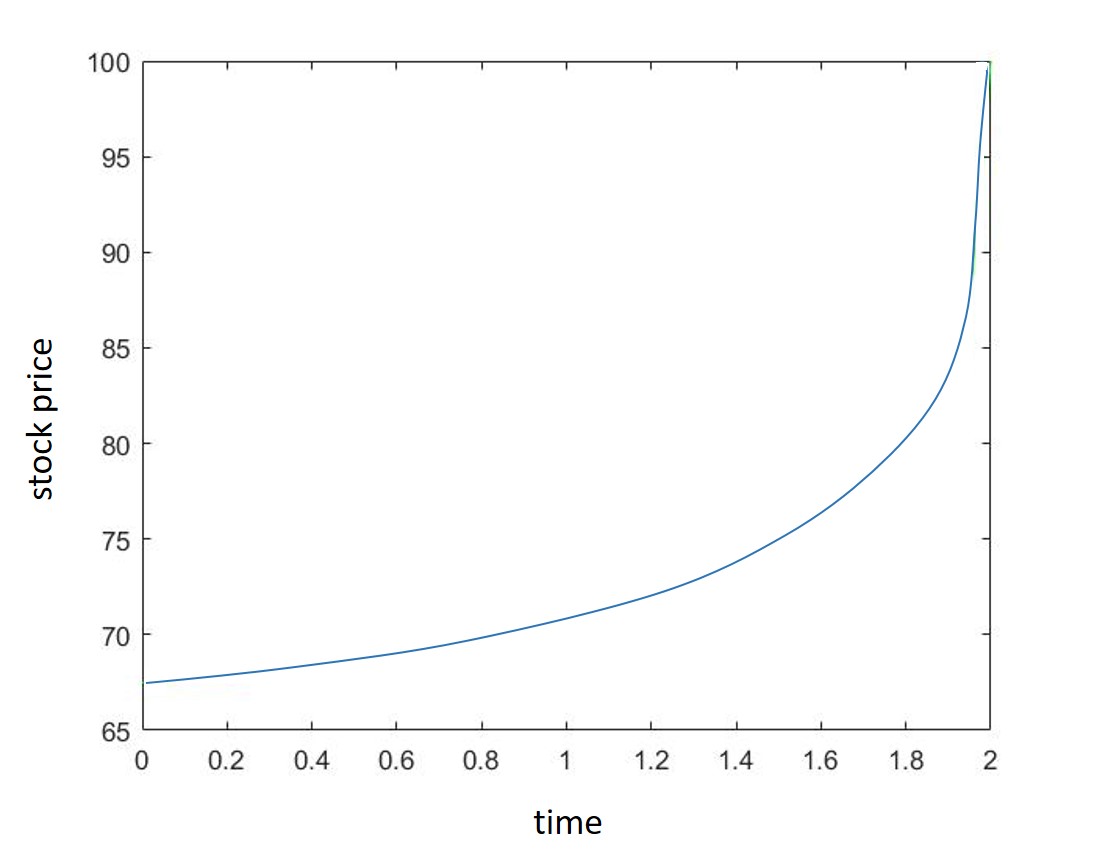}
\caption{We consider a game put option with
the above parameters.
We take $n=2000$ and compute numerically the stopping regions for the buyer.
We get that the holder should exercise at the first moment when the
stock price is below the
value given by the blue curve.}
\label{fi:PutStoppingRegionsLocalVol}
\end{figure}
By applying our trinomial models we show numerically that the seller stopping time
is given by
$$\gamma^{*}=T\wedge\inf\{t\in [0,1.33]:e^{rt} S_t=100\}.$$

Namely, after time $t^{*}=1.33$ the seller wait for the maturity date.
Observe that for game call options (with the same parameters) we also obtained numerically that after time
$1.33$ the seller wait for the maturity date.
An open question, which we leave for future research is to understand whether
this is just a coincidence or whether there is some connection between game call options and game put options ?

It remains to compute numerically the holder stopping time.
Roughly speaking, the holder will reason in the same way as
he would for the associated American put. That is to make a compromise between the stock
reaching a prescribed low value and not waiting too long.

Thus, we expect that the holder stopping time will of the form
$$\tau^{*}=T\wedge\inf\{t:e^{rt} S_t\in D^{put}\}
$$
where
the stopping region $D^{put}$ is of the form
$$D=\left\{(t,x): t\in [0,2], \ x\leq \phi(2-t)\right\}$$
where $\phi:[0,2]\rightarrow (0,100]$ is a decreasing function.
In Figure \ref{fi:PutStoppingRegionsLocalVol} we confirm this numerically.

\section*{Acknowledgments}
I would like to cordially thank my adviser and teacher, Yan Dolinsky, for guiding me and, for the long and fruitful discussions on this topic.

\end{document}